\providecommand{\U}[1]{\protect\rule{.1in}{.1in}}
\newtheorem{theorem}{Theorem}
\newtheorem{corollary}[theorem]{Corollary}
\newtheorem{definition}[theorem]{Definition}
\newtheorem{lemma}[theorem]{Lemma}
\newtheorem{remark}[theorem]{Remark}
\newenvironment{proof}[1][Proof]{\noindent\textbf{#1.} }{\ \rule{0.5em}{0.5em}}
\def\be{\begin{equation}}
\def\ee{\end{equation}}
\def\bea{\begin{eqnarray}}
\def\eea{\end{eqnarray}}
\def\reff#1{(\ref{#1})}
\def\eps{\varepsilon}
\DeclareMathOperator{\cD}{\mathcal{D}}
\DeclareMathOperator{\cP}{\mathcal{P}}
\def\cH{{\cal H}}
\def\cE{{\cal E}}
\DeclareMathOperator{\tr}{Tr}  
\def\tD{{\widetilde{D}}}
\def\cDH{\cD(\cH)}
\def\cPH{\cP(\cH)}
\DeclareMathOperator{\supp}{supp}
\let\originalleft\left
\let\originalright\right
\renewcommand{\left}{\mathopen{}\mathclose\bgroup\originalleft}
\renewcommand{\right}{\aftergroup\egroup\originalright}
\begin{document}

\title{\textbf{A limit of the quantum R\'enyi divergence}}
\author{Nilanjana Datta and Felix Leditzky\\\textit{Statistical Laboratory, University of Cambridge,}\\\textit{Cambridge CB3 0WB, United Kingdom}}
\maketitle


\begin{abstract}
Recently, an interesting quantity called the quantum R\'enyi divergence (or ``sandwiched'' R\'enyi relative entropy) was defined for pairs of positive semi-definite operators $\rho$ and $\sigma$. It depends on a parameter $\alpha$ and acts as a parent quantity for other relative entropies which have important operational significances in quantum information theory: the quantum relative entropy and the min- and max-relative entropies. There is, however, another relative entropy, called the $0$-relative R\'enyi entropy, which plays a key role in the analysis of various quantum information-processing tasks in the one-shot setting. 
We prove that the $0$-relative R\'enyi entropy is obtainable from the quantum R\'enyi divergence only if $\rho$ and $\sigma$ have equal supports. 
This, along with existing results in the literature, suggests that it suffices to consider two essential parent quantities
from which operationally relevant entropic quantities can be derived -- the quantum R\'enyi divergence with parameter $\alpha \ge 1/2$, and the $\alpha$-relative R\'enyi entropy with $\alpha\in [0,1)$.

\end{abstract}

\section{Introduction}
A fundamental quantity in quantum mechanics which plays a particularly relevant role in quantum information theory is the $\alpha$-relative R\'enyi entropy (see e.g. \cite{Renyi} and \cite{milan1}), where the parameter $\alpha\in [0, \infty) \setminus \{1\}$.
Its applications range over problems as diverse as binary state discrimination, state compression, information transmission, catalytic transformations of states, as well as finding bounds on the communication complexity of certain distributed computation problems (see e.g.~\cite{LMW} and references therein).

Recently, a generalization of this quantity was proposed independently by  Wilde {\em{et al.}} \cite{WWY} and M\"uller-Lennert {\em{et al.}}\cite{ML}. The resulting quantity, which was referred to as {\em{``sandwiched'' R\'enyi relative entropy}} in the former paper, and {\em{quantum R\'enyi divergence}} in the latter\footnote{In this paper, we use the name {\em{quantum R\'enyi divergence}}}, is defined as follows (see also \cite{FL, Beigi}): For a density matrix $\rho$, a positive semi-definite operator $\sigma$ and a parameter $\alpha \in (0,1) \cup (1,\infty)$,
\be
 D_\alpha(\rho || \sigma):=
 \frac{1}{\alpha - 1} \log \left[ \tr\left(\sigma^{\frac{1-\alpha}{2\alpha}}\rho  \sigma^{\frac{1-\alpha}{2\alpha}}\right)^\alpha\right]
\ee
For $\alpha \ge 1$, we set $D_\alpha(\rho || \sigma)= \infty$ if $\supp\rho \not\subseteq \supp \sigma$.
Here $\supp\rho$ denotes the support of $\rho$, i.e., the span of eigenvectors of $\rho$ corresponding to non-zero eigenvalues. 
The above definition is easily extended to the case in which $\rho \ge 0$ but $\tr \rho \ne 1$, and our result (Theorem~\ref{thm:main-result} below) is valid for it too.

For certain ranges of values of the parameter $\alpha$, the quantum R\'enyi divergence, 
$D_\alpha(\rho || \sigma)$, has been proved to possess a host of interesting properties
desirable for a divergence measure 
 (see e.g. \cite{WWY, ML, FL, Beigi}). In particular, for $\alpha \ge 1/2$ it satisfies the data-processing inequality \cite{FL, Beigi}, i.e., it is 
monotonous under any completely positive trace-preserving (CPTP) map $\Lambda$:
\begin{align*}
 D_\alpha\left(\Lambda(\rho) ||\Lambda(\sigma)\right) \le  D_\alpha(\rho || \sigma) \quad {\hbox{for}} \,\,\alpha \ge 1/2.
\end{align*}
It satisfies joint convexity for $1/2\le \alpha \le 1$ \cite{FL}, and positivity if $\rho$ and $\sigma$ are both density matrices \cite{WWY, ML, Beigi}. 

Moreover, the quantum R\'enyi divergence acts as a parent quantity for other relative entropy quantites. If $[\rho, \sigma]=0$, it reduces to the standard $\alpha$-relative R\'enyi entropy: 
\begin{align*}
 \tD_\alpha(\rho || \sigma):= \frac{1}{\alpha -1} \log \left( \tr(\rho^\alpha \sigma^{1-\alpha})\right),
\end{align*}
and hence can be viewed as a generalization of  $\tD_\alpha(\rho || \sigma)$ to the non-commutative case. In the limit 
 $\alpha \to 1$, it reduces to the quantum relative entropy $$D(\rho||\sigma) := \tr(\rho \log \rho) - \tr( \rho \log \sigma),$$ whereas in the limit $\alpha \to \infty$
it reduces to the max-relative entropy \cite{datta08}: $$D_{\max}(\rho||\sigma) :=\inf \{\gamma: \rho \le 2^\gamma \sigma\}.$$ Moreover, for $\alpha=1/2$ it is equal to the min-relative entropy \cite{dupuis}: $$D_{\min}(\rho||\sigma) := - 2 \log ||\sqrt{\rho}\sqrt{\sigma}||_1.$$

The min- and max- relative entropies are of particular relevance in the fast-developing field of one-shot quantum information theory. This is because characterizations of operational quantities (e.g. capacities of channels, data compression limits,
entanglement cost, distillable entanglement etc.) in the one-shot setting can be obtained in terms of smoothed entropies \cite{Renner} derived from them (see e.g. \cite{one-shot} and references therein). There is, however, another 
relative entropy which is of particular relevance in the one-shot setting. This is the $0$-relative R\'enyi entropy
\begin{align}\label{rel0}
 \tD_0(\rho||\sigma) := \lim_{\alpha \to 0} \tD_\alpha(\rho || \sigma)
= - \log\left(  \tr \Pi_\rho \sigma\right),
\end{align}
where $\Pi_\rho$ denotes the projection onto the support of $\rho$ (and obviously satisfies
$0\le \Pi_\rho \le I$, $\tr(\Pi_\rho\rho) = 1$).
It has a simple operational interpretation in the problem of binary state discrimination; namely, for two density matrices $\rho$ and $\sigma$, $2^{-\tD_0(\rho||\sigma)}$ is equal to the minimum probability of type II error, under the condition that the type I error is strictly zero\footnote{For a binary POVM $\{E, I-E\}$ used for the state discrimination, we define the probabilities of 
type I and type II errors respectively as
$\alpha(\rho, \sigma) = \tr\left((I-E)\rho \right)$ and $\beta(\rho, \sigma) = \tr\left(E\sigma \right)$ respectively.}. If instead one allows the probability of type I error to be bounded above by $\eps >0$, then the minimum  probability of type II error is given by a similar expression, but with $\tD_0(\rho||\sigma)$ replaced by a smoothed version obtained by replacing $\Pi_\rho$ in \reff{rel0} by an operator $0\le A \le I$, $\tr(A\rho) \ge 1- \eps$, and maximizing over all such operators. This yields the quantity 
$$\max_{\substack{0\le A \le I\\\tr(A\rho) \ge 1- \eps}}\left( - \log \left( \tr\left(A \sigma\right)\right) \right).$$
The above quantity is usually referred to as
the {\em{hypothesis testing relative entropy}} \cite{WR, dupuis}, and denoted as $D_H^\eps(\rho||\sigma)$. It has various interesting properties and applications in one-shot quantum information theory, e.g. bounds on the one-shot capacity of a classical-quantum channel can be expressed in terms of it \cite{WR}; it also arises in the analysis of one-shot lossy data compression \cite{lossy}. An alternative smoothed version of $\tD_0(\rho||\sigma)$ occurs naturally in the characterization of the one-shot entanglement cost under local operations and classical communication (LOCC) \cite{BD}. In fact, the (unsmoothed) $0$-relative R\'enyi entropy provides an explicit characterization of the entanglement cost of a bipartite state in the case of {\em{perfect}} entanglement dilution in the one-shot setting \cite{BD}: the minimum number of Bell states needed to create a single copy of a bipartite state $\rho_{AB}$ {\em{perfectly}}, by two distant parties (Alice and Bob, say) using LOCC is given by the following expression:
$$ \min_{\cE}\max_{\sigma_R} \left\{ - \tD_0 \left( \rho_{AR}^\cE||  I_A \otimes \sigma_R\right) \right\},$$
where the maximization is over density matrices $\sigma_R$ on a Hilbert space $\cH_R$, and the minimization is over all possible pure-state ensembles $\cE=\{p_i, |\psi^i_{AB}\rangle\}$ such that $\rho_{AB} = \sum_i p_i |\psi^i_{AB}\rangle \langle \psi^i_{AB}|$, and  $\rho_{AR}^\cE = \tr_B  \rho_{ABR}^\cE$ where 
$$ \rho_{ABR}^\cE= \sum_i p_i |i_R\rangle \langle i_R|\otimes |\psi^i_{AB}\rangle \langle \psi^i_{AB}|.$$

Hence, it is evident that the $0$-relative R\'enyi entropy plays an important role in one-shot quantum information theory. Thus, it is interesting to ask whether the quantum R\'enyi divergence also acts as a parent quantity for it, in the general case in which $[\rho, \sigma]\ne 0$. In this paper, we answer this question in the affirmative when $\rho$ and $\sigma$ have equal supports. However, the $0$-relative R\'enyi entropy is not obtainable from the quantum R\'enyi divergence when the supports of $\rho$ and $\sigma$ are unequal. We prove the latter via an explicit counterexample. 

Our result is given by Theorem~\ref{thm:main-result} below, and its implications are discussed in the Conclusions.

\section{The result}
Throughout the paper $\cH$ denotes a finite-dimensional Hilbert space. We denote by $\cPH$ the set of positive semi-definite operators on $\cH$ and by $\cDH$ the set of density operators on $\cH$, i.e.~operators $\rho\in\cPH$ with $\tr\rho=1$. 

Our result is stated in the following theorem.
\begin{theorem}\label{thm:main-result}
For $\rho\in \cDH$ and $\sigma\in\cPH$ with ${\rm{supp}}\, \rho = {\rm{supp}}\, \sigma$, the following identity holds 
\be\label{eq:main}
\lim_{\alpha\rightarrow 0}D_\alpha(\rho||\sigma) = \tD_0(\rho||\sigma).
\ee
The above identity does not necessarily hold if ${\rm{supp}}\, \rho \subset {\rm{supp}}\, \sigma$.
\end{theorem}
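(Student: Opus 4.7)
Since $1/(\alpha-1)\to -1$ as $\alpha\to 0^+$, the identity \eqref{eq:main} is equivalent to
\[
 \lim_{\alpha\to 0^+}\tr\!\left(\sigma^{s}\rho\sigma^{s}\right)^\alpha=\tr(\Pi_\rho\sigma), \qquad s:=\frac{1-\alpha}{2\alpha},
\]
and my plan is to sandwich the left-hand side between two explicit expressions, each tending to $\tr(\Pi_\rho\sigma)$. The critical arithmetic fact behind both bounds is $2s\alpha=1-\alpha$, so every appearance of $\sigma^{2s}$ raised to the $\alpha$-th power becomes a tame $\sigma^{1-\alpha}\to\sigma$.

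For the lower bound I would invoke the Araki--Lieb--Thirring inequality
\[
 \tr(B^{1/2}AB^{1/2})^p\ge \tr B^pA^p, \qquad A,B\ge 0,\ p\in(0,1],
\]
with $A=\sigma^{2s}$, $B=\rho$, $p=\alpha$, together with the fact that $\sigma^s\rho\sigma^s$ and $\rho^{1/2}\sigma^{2s}\rho^{1/2}$ share the same nonzero spectrum. This gives $\tr(\sigma^s\rho\sigma^s)^\alpha\ge \tr\rho^\alpha\sigma^{1-\alpha}$, whose limit is $\tr(\Pi_\rho\sigma)$ since $\rho^\alpha\to\Pi_\rho$ and $\sigma^{1-\alpha}\to\sigma$. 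For the upper bound I would use the equal-support hypothesis to write $\rho\le \|\rho\|_\infty\Pi_\rho=\|\rho\|_\infty\Pi_\sigma$, so that $\sigma^s\rho\sigma^s\le \|\rho\|_\infty\sigma^s\Pi_\sigma\sigma^s=\|\rho\|_\infty\sigma^{2s}$. Operator monotonicity of $x\mapsto x^\alpha$ for $\alpha\in(0,1)$ followed by taking the trace then yields $\tr(\sigma^s\rho\sigma^s)^\alpha\le \|\rho\|_\infty^\alpha\tr\sigma^{1-\alpha}\to\tr\sigma$, which coincides with $\tr(\Pi_\rho\sigma)$ precisely because $\Pi_\rho=\Pi_\sigma$. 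Sandwiching closes the positive part.

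For the negative claim I would exhibit a qubit counterexample: take $\rho=|0\rangle\langle 0|$ and $\sigma=p|+\rangle\langle+|+(1-p)|-\rangle\langle-|$ on $\mathbb{C}^2$ with $p\in(0,1)\setminus\{1/2\}$ and $|\pm\rangle=(|0\rangle\pm|1\rangle)/\sqrt2$, so that $\supp\rho\subsetneq\supp\sigma$. Since $\sigma^s\rho\sigma^s$ is rank one with unique nonzero eigenvalue $\tfrac{1}{2}(p^{2s}+(1-p)^{2s})$, a direct computation gives
\[
 D_\alpha(\rho||\sigma)=\frac{\alpha}{\alpha-1}\log\!\left[\tfrac{1}{2}\!\left(p^{2s}+(1-p)^{2s}\right)\right]\xrightarrow[\alpha\to 0]{}-\log\max(p,1-p),
\]
whereas $\tD_0(\rho||\sigma)=-\log\langle 0|\sigma|0\rangle=\log 2$, and these differ whenever $p\ne 1/2$. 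The main obstacle in the positive direction is the upper bound: the estimate $\|\rho\|_\infty^\alpha\tr\sigma^{1-\alpha}$ only tends to $\tr\sigma$, which matches the target $\tr(\Pi_\rho\sigma)$ solely under the equal-support hypothesis; the slack $\tr\sigma-\tr(\Pi_\rho\sigma)>0$ under strict inclusion is precisely what the counterexample realizes.
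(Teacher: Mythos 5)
Your argument is correct. Half of it coincides with the paper's proof and half takes a genuinely different route. The bound $\limsup_{\alpha\to 0}D_\alpha(\rho||\sigma)\le\tD_0(\rho||\sigma)$ via Araki--Lieb--Thirring is exactly the paper's Lemma~\ref{lem:upper-bound}: you apply the inequality with the roles of the two operators interchanged and then pass between $\sigma^s\rho\sigma^s$ and $\rho^{1/2}\sigma^{2s}\rho^{1/2}$ (same nonzero spectrum), whereas the paper applies it directly with $A=\rho$, $B^{1/2}=\sigma^{\beta}$ --- same content. Your counterexample is also the paper's up to normalization: your $\sigma$ equals $\tfrac12\bigl(\begin{smallmatrix}1&c\\ c&1\end{smallmatrix}\bigr)$ with $c=2p-1$, and both computations reduce to the observation that for rank-one $\rho$ the limit picks out $-\log$ of the largest eigenvalue of $\sigma$ rather than $-\log\tr(\Pi_\rho\sigma)$. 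The genuine difference is in the lower bound on $D_\alpha$. The paper pinches $\sigma^\beta\rho\sigma^\beta$ in the eigenbasis of $\sigma$ via $\rho\le n\sum_j\pi_j\rho\pi_j$ (Corollary~\ref{cor:pinching}), obtaining a diagonal operator whose $\alpha$-th power is computed explicitly, with a dimension factor $n^\alpha\to1$ that washes out. You instead use $\rho\le\|\rho\|_\infty\Pi_\rho=\|\rho\|_\infty\Pi_\sigma$ --- the only place equal supports enters --- to get $\sigma^s\rho\sigma^s\le\|\rho\|_\infty\,\sigma^{2s}$, and then operator monotonicity of $x\mapsto x^\alpha$ together with $2s\alpha=1-\alpha$ gives $\tr(\sigma^s\rho\sigma^s)^\alpha\le\|\rho\|_\infty^\alpha\tr\sigma^{1-\alpha}\to\tr\sigma=\tr(\Pi_\rho\sigma)$, the prefactor $\|\rho\|_\infty^\alpha\to1$ playing the role of the paper's $n^\alpha$. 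Your version avoids the pinching machinery entirely and makes transparent exactly where the equal-support hypothesis is used; the paper's computation is slightly more informative in that it exhibits the spectral data $(s_j,\mu_j)$ of the bounding operator, but for the $\alpha\to0$ limit your shortcut is sufficient and correct.
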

\smallskip

The claim in \reff{eq:main} follows from Lemma \ref{lem:upper-bound} and Lemma \ref{lem:lower-bound} below. In  Lemma \ref{lem:upper-bound} we prove
that the left hand side of \reff{eq:main} is upper bounded by $\tD_0(\rho||\sigma)$, for any $\rho\in \cDH$ and $\sigma\in\cPH$ with ${\rm{supp}}\, \rho \subseteq {\rm{supp}}\, \sigma$. In Lemma \ref{lem:lower-bound} we
prove that, {\em{if}} the supports of $\rho$ and $\sigma$ are identical, then 
the left hand side of \reff{eq:main} is also lower bounded by $\tD_0(\rho||\sigma)$.

Finally we provide an explicit, analytical counterexample to \reff{eq:main}
for the case in which the supports of $\rho$ and $\sigma$ are not equal.

\subsection{Upper bound for $\lim_{\alpha\rightarrow 0} D_\alpha(\rho||\sigma)$}
Bounding the quantum R\'enyi divergence $D_\alpha(\rho||\sigma)$ from above by the $\alpha$-relative R\'enyi entropy $\tD_\alpha(\rho||\sigma)$ is achieved by employing the Araki-Lieb-Thirring inequality \cite{Ar90, LT}, which we state here:
\begin{theorem}
Let $A$ and $B$ be positive semi-definite operators on a finite-dimensional Hilbert space $\cH$ and $q\geq 0$. Then the following holds:
\begin{enumerate}[{\normalfont (i)}]\label{thm:araki}
\item $\tr\left(B^{\nicefrac{1}{2}}AB^{\nicefrac{1}{2}}\right)^{rq} \leq \tr\left(B^{\nicefrac{r}{2}}A^rB^{\nicefrac{r}{2}}\right)^q$ for $r\geq 1$
\item $\tr\left(B^{\nicefrac{1}{2}}AB^{\nicefrac{1}{2}}\right)^{rq} \geq \tr\left(B^{\nicefrac{r}{2}}A^rB^{\nicefrac{r}{2}}\right)^q$ for $0\leq r< 1$
\end{enumerate}
\end{theorem}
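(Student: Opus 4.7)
The plan is to prove part~(i) via Araki's log-majorization of eigenvalues, and then to deduce part~(ii) from (i) by a substitution. I denote by $\lambda_j(X)$ the $j$-th largest eigenvalue of a self-adjoint operator $X$ on $\cH$.

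First I would establish the operator-norm special case: for $r\geq 1$,
\[
\|(B^{1/2}AB^{1/2})^r\|_\infty \leq \|B^{r/2}A^rB^{r/2}\|_\infty.
\]
By homogeneity I may assume the right-hand side equals $1$, i.e.\ $B^{r/2}A^rB^{r/2}\leq I$. If $B$ is invertible, this is equivalent to $A^r \leq B^{-r}$, and since the map $t\mapsto t^{1/r}$ is operator monotone on $[0,\infty)$ for $r\geq 1$, I obtain $A\leq B^{-1}$, hence $B^{1/2}AB^{1/2}\leq I$, and therefore $(B^{1/2}AB^{1/2})^r \leq I$. The case in which $B$ is singular follows by replacing $B$ with $B+\eps I$ and letting $\eps\to 0$.

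Next I would lift this norm inequality to a full log-majorization via $k$-fold antisymmetric tensor powers. Writing $X^{\wedge k}$ for the restriction of $X^{\otimes k}$ to the antisymmetric subspace of $\cH^{\otimes k}$, one uses the two key properties $\|X^{\wedge k}\|_\infty = \prod_{j=1}^k \lambda_j(X)$ for $X\geq 0$, together with the compatibilities $(XYZ)^{\wedge k} = X^{\wedge k}Y^{\wedge k}Z^{\wedge k}$ and $(X^p)^{\wedge k} = (X^{\wedge k})^p$ for $p\geq 0$. Applying the norm case to $A^{\wedge k}$ and $B^{\wedge k}$ thus yields, for every $k$,
\[
\prod_{j=1}^k \lambda_j\bigl((B^{1/2}AB^{1/2})^r\bigr) \leq \prod_{j=1}^k \lambda_j\bigl(B^{r/2}A^rB^{r/2}\bigr),
\]
which is Araki's log-majorization. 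To pass to the trace inequality I invoke the standard majorization-theoretic fact that weak log-majorization $\prod_{j\leq k}x_j \leq \prod_{j\leq k}y_j$ (of non-negative decreasing sequences) implies $\sum_j f(x_j) \leq \sum_j f(y_j)$ for every continuous non-decreasing $f$ on $[0,\infty)$ for which $f\circ\exp$ is convex. Applying this with $f(t)=t^q$ (so that $f(e^t)=e^{qt}$) delivers exactly (i).

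Part~(ii) is then obtained from (i) by substitution: given $0\leq r<1$ and $q\geq 0$, apply (i) to the operators $\tilde A := A^r$ and $\tilde B := B^r$ with parameters $\tilde r := 1/r \geq 1$ and $\tilde q := rq$, noting that $\tilde B^{\tilde r/2}\tilde A^{\tilde r}\tilde B^{\tilde r/2} = B^{1/2}AB^{1/2}$ and $\tilde r\tilde q = q$. The main obstacle is the log-majorization step itself: one has to verify carefully that the antisymmetric-tensor extension is compatible with the continuous functional calculus, so that the formula $\|X^{\wedge k}\|_\infty = \prod_{j\leq k}\lambda_j(X)$ applies uniformly to the fractional powers of $A$ and $B$ appearing in the inequality, and that the singular-$B$ subtleties pass cleanly through the $\eps\to 0$ limit. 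Once these points are in place, operator monotonicity of $t\mapsto t^{1/r}$ and elementary majorization theory finish both parts.
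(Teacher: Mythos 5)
Your proposed argument is correct, but it is worth noting that the paper does not prove this theorem at all: it is quoted as a known result, the Araki--Lieb--Thirring inequality, with the proof delegated to the references \cite{Ar90, LT}. What you have written is essentially a faithful reconstruction of Araki's original argument from \cite{Ar90}: the operator-norm case via $B^{r/2}A^rB^{r/2}\leq I \Leftrightarrow A^r\leq B^{-r}$ and operator monotonicity of $t\mapsto t^{1/r}$; the lift to weak log-majorization of eigenvalues by applying the norm case to antisymmetric tensor powers; and the passage from log-majorization to the trace inequality via the standard fact that $x\prec_{w\log}y$ implies $\sum_j f(x_j)\leq\sum_j f(y_j)$ for continuous non-decreasing $f$ with $f\circ\exp$ convex, applied to $f(t)=t^q$. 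The deduction of (ii) from (i) by the substitution $\tilde A=A^r$, $\tilde B=B^r$, $\tilde r=1/r$, $\tilde q=rq$ is also the standard one and checks out: $\tilde B^{1/2}\tilde A\tilde B^{1/2}=B^{r/2}A^rB^{r/2}$, $\tilde B^{\tilde r/2}\tilde A^{\tilde r}\tilde B^{\tilde r/2}=B^{1/2}AB^{1/2}$ and $\tilde r\tilde q=q$ reproduce (ii) exactly. Two small points: the substitution silently excludes the degenerate endpoint $r=0$ (where both sides reduce to a convention for $X^0$ and the claim is trivial, but this should be said); and the ``main obstacles'' you flag --- compatibility of $\wedge^k$ with the functional calculus, the behaviour of zero eigenvalues when taking logarithms in the majorization step, and the $\eps\to0$ limit for singular $B$ --- are indeed exactly where the remaining work lies, though all are routine continuity arguments in finite dimension. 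In short: correct, self-contained where the paper merely cites, and at the cost of importing the antisymmetric-tensor and majorization machinery that the citation hides.
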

This allows us to prove the following
\begin{lemma}\label{lem:upper-bound}
For all $\alpha >0$, $\rho \in \cDH$, and $\sigma\in\cPH$,  with ${\rm{supp}}\, \rho \subseteq {\rm{supp}}\, \sigma$, we have\\ $D_\alpha(\rho||\sigma)\leq \tD_\alpha(\rho||\sigma)$.
\end{lemma}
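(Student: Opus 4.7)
The plan is to apply the Araki--Lieb--Thirring inequality (Theorem~\ref{thm:araki}) directly with a carefully chosen substitution, and then handle the two ranges $\alpha\ge 1$ and $0<\alpha<1$ separately since the prefactor $\tfrac{1}{\alpha-1}$ changes sign.

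First I would set $A := \rho$ and $B := \sigma^{(1-\alpha)/\alpha}$, so that $B^{1/2} = \sigma^{(1-\alpha)/(2\alpha)}$ and consequently
\[
\tr\left(B^{1/2}AB^{1/2}\right)^{rq}
= \tr\left(\sigma^{(1-\alpha)/(2\alpha)}\,\rho\,\sigma^{(1-\alpha)/(2\alpha)}\right)^{rq}.
\]
With the choice $r=\alpha$ and $q=1$, the right-hand side of Theorem~\ref{thm:araki} becomes
\[
\tr\left(B^{r/2}A^rB^{r/2}\right)^q
= \tr\left(\sigma^{(1-\alpha)/2}\rho^\alpha\sigma^{(1-\alpha)/2}\right)
= \tr\left(\rho^\alpha\sigma^{1-\alpha}\right)
\]
by cyclicity of the trace, while the left-hand side is exactly the quantity appearing inside the logarithm in the definition of $D_\alpha(\rho\|\sigma)$.

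Next I would split into two cases. For $\alpha\ge 1$ we have $r=\alpha\ge 1$, and Theorem~\ref{thm:araki}(i) yields
\[
\tr\left(\sigma^{(1-\alpha)/(2\alpha)}\rho\,\sigma^{(1-\alpha)/(2\alpha)}\right)^\alpha
\le \tr\left(\rho^\alpha\sigma^{1-\alpha}\right).
\]
Taking $\log$ and multiplying by the positive factor $\tfrac{1}{\alpha-1}>0$ preserves the inequality, giving $D_\alpha(\rho\|\sigma)\le \tD_\alpha(\rho\|\sigma)$. For $0<\alpha<1$ we have $r=\alpha\in[0,1)$, and Theorem~\ref{thm:araki}(ii) gives the reverse trace inequality; however, the prefactor $\tfrac{1}{\alpha-1}$ is now negative, so the inequality flips back upon rearrangement and we again obtain $D_\alpha(\rho\|\sigma)\le \tD_\alpha(\rho\|\sigma)$.

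There is no genuinely hard step here; the only point that needs care is well-definedness of the negative powers of $\sigma$ when $\alpha>1$, which is handled by the hypothesis $\supp\rho\subseteq\supp\sigma$: one interprets $\sigma^{(1-\alpha)/\alpha}$ via the generalized inverse on $\supp\sigma$, and the inequality is unaffected because $\rho$ is supported within $\supp\sigma$. (If one prefers, one can carry out the computation with $\sigma+\varepsilon I$ in place of $\sigma$ and take $\varepsilon\downarrow 0$, using continuity of both sides.) Thus the bookkeeping of the sign of $\alpha-1$ against the direction supplied by Araki--Lieb--Thirring is the only substantive content of the proof.
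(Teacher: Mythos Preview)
Your proposal is correct and essentially identical to the paper's proof: both apply Araki--Lieb--Thirring with $A=\rho$, $B^{1/2}=\sigma^{(1-\alpha)/(2\alpha)}$, $r=\alpha$, $q=1$, then use that $\tfrac{1}{\alpha-1}$ changes sign at $\alpha=1$ to obtain the same direction of inequality in both regimes. You actually supply slightly more detail than the paper, which disposes of the case $\alpha>1$ in a single sentence; your remark about interpreting negative powers of $\sigma$ via the generalized inverse on $\supp\sigma$ is a welcome clarification that the paper omits.
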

\begin{proof}
Let $\beta:=\frac{1-\alpha}{2\alpha}$ and consider $\alpha\in (0,1)$. Setting $q=1$, $r=\alpha$, $A=\rho$ and $B^{\nicefrac{1}{2}}=\sigma^\beta$ in Theorem~\ref{thm:araki} (ii) yields
\begin{align}
\tr(\sigma^\beta\rho\sigma^\beta)^\alpha &\geq \tr(\sigma^{\frac{1-\alpha}{2}}\rho^\alpha\sigma^{\frac{1-\alpha}{2}}) = \tr(\rho^\alpha\sigma^{1-\alpha})\label{eqn:araki-estimate}
\end{align}
where the equality sign follows from the cyclicity of the trace. Since the logarithm is a monotonous function and ${\alpha-1}<0$ for $\alpha\in(0,1)$, we obtain from \eqref{eqn:araki-estimate} 
that
$$D_\alpha(\rho||\sigma)=\frac{1}{\alpha-1}\log\tr(\sigma^\beta\rho\sigma^\beta)^\alpha \leq \frac{1}{\alpha-1}\log\tr(\rho^\alpha\sigma^{1-\alpha}) = \tD_\alpha(\rho||\sigma),$$ proving the claim for $\alpha \in (0,1)$. The claim for
$\alpha > 1$ follows analogously from Theorem~\ref{thm:araki} (i), as was mentioned in \cite{WWY}.
\end{proof}
\begin{remark}
{\em{Note that Lemma \ref{lem:upper-bound} implies that $$\lim_{\alpha\rightarrow 0}D_\alpha(\rho||\sigma) \leq \lim_{\alpha\rightarrow 0}\tD_\alpha(\rho||\sigma) = \tD_0(\rho||\sigma).$$}}
\end{remark}

\subsection{Lower bound for $\lim_{\alpha\rightarrow 0}D_\alpha(\rho||\sigma)$ when ${\rm{supp}}\, \rho = {\rm{supp}}\, \sigma$}
To obtain a lower bound of $\lim_{\alpha\rightarrow 0} D_\alpha(\rho||\sigma)$, when ${\rm{supp}}\, \rho = {\rm{supp}}\, \sigma$, we use the concept of ``pinching'' defined as follows (see e.g. \cite[Appendix A]{OH01}):
\begin{definition}[Pinching]\label{def:pinching}
Given an operator $B \in \cPH$, let $B= \sum_{I=1}^{n(B)} b_iP_i$ be its spectral
decomposition, where $n(B)$ is the number of distinct eigenvalues of $B$, and $P_i$ is the projection 
corresponding to the eigenvalue $b_i$, $i=1,2,\ldots, n(B)$. The following map
\begin{align*}
\cE_B: \omega\in \cPH \mapsto \cE_B(\omega):= \sum_{i=1}^{n(B)}P_i \omega P_i \,\,\in \cPH,
\end{align*}
is called pinching. Note that the set of projectors $\{P_i\}_{i=1}^{n(B)}$ satisfy the completeness relation $\sum_{i=1}^{n(B)} P_i = I$,
and are said to constitute a projection-valued measure (PVM).
\end{definition}
With this definition, the following lemma holds:
\begin{lemma}\label{lem:pinching}
Given a PVM $\{P_i\}_{i=1}^{n(B)}$, for all $\rho \in \cD(\cH)$,
\begin{align*}
\rho \le n(B) \cE_B(\rho).
\end{align*}
\end{lemma}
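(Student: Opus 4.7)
The plan is to exhibit the pinching map $\cE_B$ as a uniform average of unitary conjugations, so that one of the terms in that average is the identity conjugation on $\rho$ itself, and the remaining terms are manifestly positive semi-definite. This will give $\rho \le n(B)\,\cE_B(\rho)$ almost for free.

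Concretely, writing $n := n(B)$ and $\{P_i\}_{i=1}^{n}$ for the given PVM, I would introduce the phase operators
\[
U_k := \sum_{j=1}^{n} e^{2\pi i jk/n}\,P_j, \qquad k=0,1,\dots,n-1.
\]
The first step is to verify that each $U_k$ is unitary, which follows from $P_jP_{j'}=\delta_{jj'}P_j$ and $\sum_j P_j = I$: indeed $U_k U_k^* = \sum_{j,j'} e^{2\pi i (j-j')k/n} P_j P_{j'} = \sum_j P_j = I$. In particular $U_0 = \sum_j P_j = I$.

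The second step is to compute the average $\frac{1}{n}\sum_{k=0}^{n-1} U_k\,\rho\,U_k^*$. Expanding the two sums gives
\[
\frac{1}{n}\sum_{k=0}^{n-1} U_k \rho U_k^* = \sum_{j,j'} \left(\frac{1}{n}\sum_{k=0}^{n-1} e^{2\pi i(j-j')k/n}\right) P_j \rho P_{j'} = \sum_{j,j'} \delta_{jj'} P_j \rho P_{j'} = \sum_{j=1}^{n} P_j \rho P_j = \cE_B(\rho),
\]
where I used the discrete orthogonality relation for roots of unity.

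The final step is to read off the inequality. Since $U_0 = I$ and each $U_k \rho U_k^* \ge 0$ (conjugating a positive operator by a unitary preserves positivity), I have
\[
n\,\cE_B(\rho) = \sum_{k=0}^{n-1} U_k\rho U_k^* = \rho + \sum_{k=1}^{n-1} U_k\rho U_k^* \ge \rho,
\]
which is exactly the claim. There is essentially no obstacle here; the only mildly non-obvious ingredient is the unitary-average representation of the pinching map, and once that is in hand the inequality is immediate from the positivity of the omitted terms.
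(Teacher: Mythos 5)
Your proof is correct. Note that the paper does not actually prove this lemma itself --- it simply cites \cite[Appendix B, Lemma 2]{OH01} --- so there is no in-paper argument to compare against; what you have written is the standard self-contained proof of the pinching inequality via the representation of $\cE_B$ as a uniform average of conjugations by the phase unitaries $U_k=\sum_j e^{2\pi i jk/n}P_j$, and all three steps (unitarity of the $U_k$, the discrete orthogonality relation collapsing the double sum to $\sum_j P_j\rho P_j$, and dropping the positive semi-definite terms with $k\ge 1$) check out.
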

\begin{proof}
See \cite[Appendix B, Lemma 2]{OH01}.
\end{proof}

We employ a slight adaption of the above lemma, given by the following corollary, which can easily be proved using Gram-Schmidt orthogonalization.
\begin{corollary}\label{cor:pinching}
Consider $\sigma \in \cPH$ with $n:= {\rm{dim}} \cH$ and its eigenvalue decomposition
\begin{align}
\sigma = \sum_{j=1}^n s_j \pi_j, \quad {\hbox{with}} \quad \pi_j = |\psi_j \rangle\langle \psi_j|,\label{eqn:sigma-eig-decomp}
\end{align}
with eigenvalues $s_j$ listed in decreasing order, repeated according to their multiplicities, and rank-one projectors $\pi_j = |\psi_j\rangle \langle \psi_j|$ taking care of degeneracies. 
Then for all $\rho \in \cDH$:
\begin{align*}
 \rho \le n \sum_{j=1}^n \pi_j \rho \pi_j.
\end{align*}
\end{corollary}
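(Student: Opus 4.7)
The plan is to reduce Corollary \ref{cor:pinching} to Lemma \ref{lem:pinching} by replacing $\sigma$ with an auxiliary operator whose spectrum is non-degenerate, chosen so that its spectral projectors are exactly the rank-one projectors $\pi_j$ appearing in the statement.

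First, I would construct the orthonormal eigenbasis $\{|\psi_j\rangle\}_{j=1}^n$. Write $\sigma = \sum_{i=1}^{n(\sigma)} \lambda_i Q_i$ for the genuine spectral decomposition into eigenspaces with distinct eigenvalues $\lambda_i$ and spectral projectors $Q_i$ of rank $m_i$, where $\sum_i m_i = n$. Within each eigenspace $\mathrm{range}(Q_i)$, apply Gram-Schmidt to any spanning set to obtain an orthonormal basis; collecting these bases over all $i$ yields an orthonormal basis $\{|\psi_j\rangle\}_{j=1}^n$ of $\cH$ consisting of eigenvectors of $\sigma$, so that $\sigma = \sum_{j=1}^n s_j \pi_j$ with $\pi_j = |\psi_j\rangle\langle \psi_j|$ and $Q_i = \sum_{j:\, s_j = \lambda_i} \pi_j$, matching \eqref{eqn:sigma-eig-decomp}.

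Next, define an auxiliary operator
\[
\tilde\sigma := \sum_{j=1}^n \tilde s_j\, \pi_j \in \cPH
\]
by choosing any pairwise distinct positive reals $\tilde s_1, \ldots, \tilde s_n$ (e.g.\ $\tilde s_j = j$). By construction, the spectral decomposition of $\tilde\sigma$ has exactly $n(\tilde\sigma) = n$ distinct eigenvalues, and the corresponding spectral projectors are precisely the rank-one projectors $\pi_j$. Hence the associated pinching map is
\[
\cE_{\tilde\sigma}(\rho) = \sum_{j=1}^n \pi_j\, \rho\, \pi_j.
\]

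Finally, I would apply Lemma \ref{lem:pinching} to $B = \tilde\sigma$, which yields immediately
\[
\rho \le n(\tilde\sigma)\, \cE_{\tilde\sigma}(\rho) = n \sum_{j=1}^n \pi_j\, \rho\, \pi_j,
\]
which is the desired inequality. There is no real obstacle here: Lemma \ref{lem:pinching} already does all the work, and the only subtlety is that it is stated in terms of the PVM of eigenspace projectors rather than of individual eigenvector projectors; this is circumvented simply by artificially breaking the degeneracies of $\sigma$ using Gram-Schmidt, which does not change the conclusion since the rank-one $\pi_j$ remain projectors summing to the identity.
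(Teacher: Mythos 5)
Your proposal is correct and follows essentially the route the paper intends: the paper only remarks that the corollary ``can easily be proved using Gram--Schmidt orthogonalization,'' i.e.\ refine the spectral PVM of $\sigma$ to the rank-one PVM $\{\pi_j\}_{j=1}^n$ via Gram--Schmidt within degenerate eigenspaces and invoke Lemma~\ref{lem:pinching} for that $n$-element PVM, which is exactly what your auxiliary operator $\tilde\sigma$ accomplishes. The $\tilde\sigma$ device is just a clean way of making Lemma~\ref{lem:pinching} literally applicable, and the argument is complete.
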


Using the above corollary, we are able to prove that, {\em{if}} $\supp\rho=\supp\sigma$ a lower bound for $\lim_{\alpha\rightarrow 0} D_\alpha(\rho||\sigma)$ is indeed given by $\tD_0(\rho||\sigma)$:
\begin{lemma}\label{lem:lower-bound}
For $\rho\in \cDH$ and $\sigma\in\cPH$, with $\supp\rho=\supp\sigma$ we have
$$\lim_{\alpha\rightarrow 0}D_\alpha(\rho||\sigma)\geq \tD_0(\rho||\sigma).$$
\end{lemma}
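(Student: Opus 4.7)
My plan is to combine the pinching inequality of Corollary~\ref{cor:pinching} with the fact that, restricted to the common support, $\sigma$ commutes with its own eigenprojectors to reduce the non-commutative quantity $\tr(\sigma^\beta \rho \sigma^\beta)^\alpha$ (with $\beta := (1-\alpha)/(2\alpha)$) to a diagonal sum. Since $\frac{1}{\alpha-1} < 0$ for $\alpha\in(0,1)$, an \emph{upper} bound on this trace will give the desired \emph{lower} bound on $D_\alpha(\rho\|\sigma)$.

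First I would observe that when $\supp\rho = \supp\sigma$ we have $\Pi_\rho = \Pi_\sigma$, so $\tD_0(\rho\|\sigma) = -\log\tr(\Pi_\sigma\sigma) = -\log\tr\sigma$; this is the target. Working on the common support (where $\sigma$ is invertible, with spectral decomposition $\sigma = \sum_j s_j \pi_j$ as in \eqref{eqn:sigma-eig-decomp}) and writing $r_j := \langle\psi_j|\rho|\psi_j\rangle$, one has $\sum_j \pi_j\rho\pi_j = \sum_j r_j \pi_j$. Applying Corollary~\ref{cor:pinching} and sandwiching by $\sigma^\beta$ (which commutes with every $\pi_j$) yields
\[
\sigma^\beta\rho\sigma^\beta \;\leq\; n\,\sigma^\beta\Bigl(\sum_j \pi_j\rho\pi_j\Bigr)\sigma^\beta \;=\; n\sum_j s_j^{2\beta} r_j\,\pi_j,
\]
where $n = \dim\supp\sigma$. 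Since $x\mapsto x^\alpha$ is operator monotone on $[0,\infty)$ for $\alpha\in(0,1)$, taking the $\alpha$-power, using the trace, and exploiting the identity $2\beta\alpha = 1-\alpha$ gives
\[
\tr(\sigma^\beta\rho\sigma^\beta)^\alpha \;\leq\; n^\alpha \sum_j s_j^{1-\alpha} r_j^\alpha.
\]
Dividing by $\alpha-1 < 0$ after taking logarithms flips the inequality and produces
\[
D_\alpha(\rho\|\sigma) \;\geq\; \frac{\alpha}{\alpha-1}\log n \;+\; \frac{1}{\alpha-1}\log\sum_j s_j^{1-\alpha} r_j^\alpha.
\]

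The main obstacle is the $\alpha\to 0$ limit of the second term: I need $r_j^\alpha \to 1$ for every $j$ indexing the sum, which requires $r_j > 0$. This is precisely where the hypothesis $\supp\rho = \supp\sigma$ enters in an essential way: if some $r_j = 0$ then $|\psi_j\rangle \in \ker\rho = (\supp\rho)^\perp = (\supp\sigma)^\perp$, contradicting $|\psi_j\rangle \in \supp\sigma$. With $r_j > 0$ secured, dominated convergence gives $\sum_j s_j^{1-\alpha} r_j^\alpha \to \sum_j s_j = \tr\sigma$, while $\frac{\alpha}{\alpha-1}\log n \to 0$ and $\frac{1}{\alpha-1}\to -1$; the right-hand side tends to $-\log\tr\sigma = \tD_0(\rho\|\sigma)$, which is the claim. (For the counterexample with $\supp\rho \subsetneq \supp\sigma$, the same calculation makes clear what can go wrong: the vanishing of some $r_j$ lets the corresponding term survive with a bad sign, so the lower bound need not reach $\tD_0(\rho\|\sigma)$.)
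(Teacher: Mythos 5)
Your proof is correct and follows essentially the same route as the paper: pinching $\sigma^\beta\rho\sigma^\beta$ with the rank-one eigenprojectors of $\sigma$ (your $r_j=\langle\psi_j|\rho|\psi_j\rangle$ is the paper's $\mu_j$), applying operator monotonicity of $x\mapsto x^\alpha$, and using the sign flip from $\alpha-1<0$ before taking $\alpha\to 0$. If anything, you are slightly more explicit than the paper in pinpointing where the equal-support hypothesis is needed, namely to guarantee $r_j>0$ so that $r_j^\alpha\to 1$.
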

\begin{proof}
Let $\rho \in \cDH$ and $\sigma \in \cPH$ with $\supp\rho = \supp\sigma$. Let $n:= {\rm{dim}} \cH$ and $\beta:=\frac{1-\alpha}{2\alpha}$, with $\alpha \in (0,1)$. Furthermore, let $\sigma$ have the eigenvalue decomposition given by \eqref{eqn:sigma-eig-decomp}. Let the analogous eigenvalue decomposition for $\rho$ be $\rho = \sum_{i=1}^n r_i |\phi_i\rangle \langle \phi_i|$.
We also define
$$
\mu_j := \sum_{i=1}^n r_i |\langle \phi_i |\psi_j\rangle |^2.
$$

Then, by Corollary~\ref{cor:pinching},
\begin{align}
C_\alpha := \sigma^\beta \rho \sigma^\beta &\le n \sum_{j=1}^n \pi_j \left(\sigma^\beta \rho \sigma^\beta \right) \pi_j\nonumber\\
&= n \sum_{j=1}^{n} s_j^{2\beta} \pi_j \rho \pi_j\nonumber\\
&= n \sum_{j=1}^{n} \left(s_j^{2\beta} \sum_{i=1}^n r_i|\langle \phi_i |\psi_j\rangle |^2 \right)\pi_j\nonumber\\
&= n  \sum_{j=1}^n \left(s_j^{2\beta} \mu_j\right)\pi_j\nonumber\\
&= n Q_\alpha.\label{calpha}
\end{align}
where
$Q_\alpha :=  \sum_{j=1}^n \left(s_j^{2\beta} \mu_j\right)\pi_j$. 
From \reff{calpha} and the operator monotonicity of $x \mapsto x^\alpha$ for $0 < \alpha < 1$, it follows  that 
\begin{align*}
C_\alpha^\alpha \le n^\alpha Q_\alpha^\alpha,
\end{align*}
Since $Q_\alpha$ is already spectrally represented, we have that 
\begin{align}
\label{dalpha}
Q_\alpha^\alpha =  \sum_{j=1}^n s_j^{1-\alpha} \mu_j^\alpha \pi_j.
\end{align}
Hence,
$$
\tr C_\alpha^\alpha \le n^\alpha \tr Q_\alpha^\alpha = n^\alpha \sum_{j\in J} s_j^{1-\alpha} \mu_j^{\alpha}.
$$
Moreover, since $\alpha <1$,
\begin{align*}
\frac{1}{\alpha - 1} \log \tr C_\alpha^\alpha \ge \frac{\alpha}{\alpha - 1} \log n + \frac{1}{\alpha - 1} \log \bigg(  \sum_{j=1}^n s_j^{1-\alpha}  \mu_j^{\alpha} \bigg)
\end{align*}
and hence
\be\label{last1}
\lim_{\alpha \to 0} D_\alpha(\rho||\sigma) \ge  - \log \bigg( \sum_{j=1}^n s_j \bigg)= - \log \left(\tr \sigma\right).
\ee
Moreover, since $\supp \rho= \supp \sigma$, we have 
\be\label{last2}
 \tD_0(\rho||\sigma) = - \log \left( \tr( \Pi_\rho \sigma)\right) = - \log \left(\tr \sigma\right).
\ee
From \reff{last1} and \reff{last2} it follows that if ${\rm{supp}} \, \rho =
{\rm{supp}} \, \sigma$,  
$$
\lim_{\alpha \to 0} D_\alpha(\rho||\sigma) \ge  \tD_0(\rho||\sigma).
$$
\end{proof}
\subsection{A counterexample to eq.\reff{eq:main} when $\rho$ and $\sigma$ have unequal supports}
Let 
$$
\rho=\begin{pmatrix}1&0\cr0&0\end{pmatrix}\ ,\quad \sigma=\begin{pmatrix}1& c\cr c&1\end{pmatrix}\ , \quad 0 < c < 1.
$$

Note that $\sigma$ is positive-definite, $\supp \rho \subset \supp \sigma$
and $[\rho, \sigma] \ne 0$. Moreover,
$\rho = \Pi_\rho$ and hence,
\be\label{eqq1}
\tD_0(\rho||\sigma) = - \log \left( \tr( \rho \sigma)\right) = - \log 1 = 0.
\ee
Let us set $\beta:= \frac{1-\alpha}{2\alpha}$ with $\alpha \in (0,1)$. Then the eigenvalues of $C_\alpha:= \sigma^\beta \rho \sigma^\beta$ are given by
$$ \lambda_1 =  \frac{1}{2} \left((1+c)^{2\beta} + (1-c)^{2\beta}\right)\quad {\hbox{and}} \quad \lambda_2 =0.$$
Using the fact that $\lim_{\alpha \to 0} f(\alpha)^{g(\alpha)} = \exp\left(\lim_{\alpha \to 0} \frac{\ln f(\alpha)}{\left(g(\alpha)\right)^{-1}}\right)$, 
one obtains the following:
\be\label{eqq2}
\lim_{\alpha \to 0} D_\alpha(\rho||\sigma)= \lim_{\alpha \to 0} \frac{1}{\alpha - 1} \log \left(\tr C_\alpha^\alpha\right) = - \log (1 + c) <0.
\ee
From \reff{eqq1} and \reff{eqq2} it follows that in this case,
$$\lim_{\alpha \to 0} D_\alpha(\rho||\sigma) \ne \tD_0(\rho||\sigma).$$

\section{Conclusions}
We prove that the $0$-relative R\'enyi entropy  $\tD_0(\rho||\sigma)$ is obtainable from the quantum R\'enyi divergence $\tD_\alpha(\rho||\sigma)$ 
if $\rho$ and $\sigma$ have equal supports. 
While $\tD_\alpha(\rho||\sigma)$ serves as an upper bound for $D_\alpha(\rho||\sigma)$ for arbitrary $\alpha\in [0,1)$ as long as $\supp\rho\subseteq \supp\sigma$, the condition of equal supports is essential for proving that  $\tD_0(\rho||\sigma)$ also serves as a lower bound for $\lim_{\alpha\rightarrow 0}D_\alpha(\rho||\sigma)$, and hence in establishing equality. This is emphasized by the counterexample given in the last subsection.

Our result implies that for values of the parameter $\alpha$ in the range $[0,1/2)$, the relative entropy which plays a fundamental role is the $\alpha$-relative R\'enyi entropy, and not the quantum R\'enyi divergence. This claim is further strengthened by the fact that the quantum R\'enyi divergence does not satisfy the data-processing inequality (a property desirable for a relative entropy) in this parameter range. The latter follows from numerical counterexamples, reported in \cite{ML}, and also found independently by us. This observation, along with existing results in the literature, suggests that, as
far as operationally relevant entropic quantities are concerned, the right parent quantity depends on the value of $\alpha$.
 For example, there are various operational interpretations of the R\'enyi entropy of order $\alpha$, 
for all $\alpha \ge 0$ (see e.g.~\cite{renyi2} for a few of these). This is given by $S_\alpha(\rho)= - \tD_\alpha(\rho||\sigma) \equiv -D_\alpha(\rho||\sigma)$. For a bipartite state, the conditional R\'enyi entropy of order $\alpha$ arises in entanglement theory (see e.g.\cite{rencond}). However, this quantity is defined as the difference of two R\'enyi entropies of order $\alpha$, and so can equivalently
be obtained from $D_\alpha$ or $\tD_\alpha$. However, the quantum Chernoff bound, which is an important quatity arising in binary state discrimination,
is expressible in terms of the $\alpha$-relative R\'enyi entropy with $\alpha \in [0,1)$ but not in terms of the quantum R\'enyi divergence. (See also \cite{milan1} and \cite{hayashi1}).
On the other hand, the min- and max- relative entropies, which play pivotal roles in one-shot information theory are obtainable from the quantum R\'enyi divergence (for $\alpha = 1/2$ and $\alpha \to \infty$, respectively) but not from the  $\alpha$-relative R\'enyi entropy. The above observations suggest that it suffices to consider two essential parent quantities
from which operationally relevant entropic quantities can be derived. These are the quantum R\'enyi divergence with parameter $\alpha \ge 1/2$, and the $\alpha$-relative R\'enyi entropy with $\alpha\in [0,1)$.

\subsection*{Acknowledgements} We are grateful to Yuri Suhov for interesting discussions. We would like to thank Serge Fehr, Marco Tomamichel and Mark Wilde for helpful comments.

\end{document}